\DeclareMathOperator{\modelA}{\mathfrak{A}}
\DeclareMathOperator{\modelB}{\mathfrak{B}}
\DeclareMathOperator{\modelC}{\mathfrak{C}}
\DeclareMathOperator{\tp}{\mathrm{tp}}
\DeclareMathOperator{\Atom}{\mathrm{Atom}}
\DeclareMathOperator{\cl}{\mathrm{cl}}
\theoremstyle{plain}
\newtheorem{theorem}{Theorem}[section]
\newtheorem{lemma}[theorem]{Lemma}
\newtheorem{corollary}[theorem]{Corollary}
\theoremstyle{definition}
\newtheorem{definition}[theorem]{Definition}
\begin{document}

\title{Complexity of the Ackermann Fragment \\ with One Leading Existential Quantifier}

\author{Reijo Jaakkola}
\ead{reijo.jaakkola@tuni.fi}
\address{Faculty of Information Technology and Communication Sciences, Tampere University, \\ Kalevantie 4, 33100 Tampere, Finland}

\begin{abstract}
We prove that the satisfiability problem of the Ackermann fragment with one leading existential quantifier is \textsc{ExpTime}-complete.
\end{abstract}

\begin{keyword}
Computational complexity \sep Satisfiability problem \sep Ackermann fragment
\end{keyword}

\maketitle

\section{Introduction}

After it was realized that the satisfiability problem of first-order logic is undecidable, a major research program emerged with the aim to classify prefix fragments of first-order logic based on their decidability status. This program was successfully completed in the 1980's. We refer the reader to \cite{borger97} for a detailed overview of its main results.

One of the two maximal decidable prefix classes with equality turned out to be the Ackermann fragment $[\exists^*\forall\exists^*]_{=}$ \cite{borger97}, i.e., the fragment of first-order logic composed of all sentences in prenex normal form, with a quantifier prefix that matches the expression $\exists^*\forall\exists^*$. It is well-known that the Ackermann fragment has a \textsc{NExpTime}-complete satisfiability problem \cite{ZEROONELAWSDECISIONPROBLEMS} and a closer analysis of the proof for the lower bound shows that it holds already for the slightly smaller fragment $[\exists^2\forall\exists^*]_{=}$.

It is claimed (without proof) in \cite{borger97} on page 288 that the satisfiability problem of $[\exists^*\forall\exists^*]_=$ can be efficiently reduced to that of $[\forall \exists^*]_=$.\footnote{More specifically, the authors claim that this reduction can be done with the method presented in Exercises 6.2.39 and 6.2.40. However, these exercises assume that the sentences under consideration do not contain equality.} If true, this claim would imply that the complexity of the Ackermann fragment remains \textsc{NExpTime}-hard even without leading existential quantifiers. However, in this paper we show that this claim is false (unless \textsc{NExpTime} = \textsc{ExpTime}). The following is our main result.

\begin{theorem}\label{theorem:main-theorem}
    The satisfiability problem of $[\exists\forall \exists^*]_{=}$ is in \textsc{ExpTime}.
\end{theorem}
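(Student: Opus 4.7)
My plan is to first absorb the single leading existential by introducing a fresh constant symbol $c$, passing to the equisatisfiable sentence $\varphi' = \forall y \, \exists x_1 \cdots \exists x_n \, \psi(c, y, x_1, \ldots, x_n)$ over the signature $\sigma \cup \{c\}$. In this rewritten formula, every element of a model is fully characterised, as far as $\varphi'$ is concerned, by its atomic $2$-type with $c$, and this is exactly the information that becomes insufficient once two or more leading existentials are present. This observation is what I expect to be responsible for the drop from \textsc{NExpTime} down to \textsc{ExpTime}.

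The heart of the argument is a type-elimination procedure. Call a \emph{$c$-type} a maximal consistent atomic type in the variables $\{c, y\}$ (including equality). An \emph{extension} of a $c$-type $\pi$ is an $(n+2)$-type $\pi^+$ over $(c, y, x_1, \ldots, x_n)$ with $\pi^+ \models \psi$ whose restriction to $(c, y)$ is $\pi$. A set $T$ of $c$-types is \emph{closed} if every $\pi \in T$ admits an extension whose restrictions to $(c, x_i)$, for each $i \in \{1, \ldots, n\}$, all belong to $T$. I would compute the greatest closed set $T^*$ by starting from the set of all $c$-types and iteratively discarding any $\pi$ whose every extension requires a $c$-type already discarded.

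I then claim that $\varphi'$ is satisfiable if and only if $T^*$ contains a \emph{diagonal} $c$-type, that is, one entailing $y = c$. Completeness is essentially immediate: in any model $\modelA$ of $\varphi'$, the set of $c$-types realised in $\modelA$ is closed and contains the $c$-type of $c^{\modelA}$ itself, which is diagonal. For soundness, given a diagonal $\pi_0 \in T^*$, I would construct a model via a tree-style unfolding: interpret $c$ according to $\pi_0$, and at each stage, for every already-created element $b$ whose $c$-type $\pi$ lies in $T^*$, pick a witnessing extension $\pi^+$ and introduce fresh witnesses $a_1, \ldots, a_n$ realising the $c$-types prescribed by $\pi^+$ (collapsing any $a_i$ that $\pi^+$ equates with an existing element). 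The crucial point is that, because the witnesses are fresh, the atomic data committed by $\pi^+$ on the tuple $(c, b, \vec{a})$ never collides with the commitments of any other tuple, so all atoms on tuples straddling distinct witness groups may safely be set to false. I expect this soundness argument to be the main technical obstacle, since it is what must accommodate predicates of arbitrary arity.

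For the complexity bound, both the number of $c$-types and the number of $(n+2)$-types are at most $2^{O(|\varphi|)}$, and testing whether $\pi^+ \models \psi$ and the closure condition can be done within this bound, so the entire elimination procedure runs in \textsc{ExpTime}.
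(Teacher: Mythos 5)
Your reduction to the constant $c$, the notion of a closed set of $c$-types, and the equivalence ``$\varphi'$ is satisfiable iff the greatest closed set contains a diagonal $c$-type'' are all sound; the full $(n+2)$-types make the tree unfolding go through (note also that agreement on the atoms mentioning only $c$ across all the types you use is automatic, since all restrictions of a single extension contain the same pure-$c$ atoms). The genuine gap is the complexity accounting at the end. In $[\exists\forall\exists^*,all]_=$ the relation symbols have unbounded arity, and a symbol $R$ of arity $k$ contributes $2^k$ syntactically distinct atoms over the two variables $\{c,y\}$, and $(n+2)^k$ atoms over $(c,y,x_1,\dots,x_n)$. Since $k$ may be of order $|\varphi|$, a single $c$-type already has an exponentially long description, so the number of $c$-types is $2^{2^{\Theta(|\varphi|)}}$ rather than $2^{O(|\varphi|)}$, and the number of candidate extensions is likewise doubly exponential. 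Your elimination procedure therefore runs in doubly exponential time as written; the claimed bound would be correct only under a constant bound on arities. Notably, the difficulty you anticipated --- soundness in the presence of high-arity predicates --- is not where the argument fails; it fails in the counting.

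The paper takes a different route precisely to dodge this: it never materializes any $2$-type. It gives an alternating polynomial-space procedure (invoking \textsc{APspace} $=$ \textsc{ExpTime}) that guesses the $1$-type of $z$ --- a $1$-type contains one literal $R(x,\dots,x)$ per symbol, hence has polynomial size and there are only $2^{|\sigma|}$ of them --- and then repeatedly guesses a size-$(n+2)$ witness structure described only by the truth values of the atoms actually occurring in $\psi$ together with the $1$-types of its elements, after which the universal player selects the element to be processed next; correctness is argued by unravelling a winning strategy into an increasing chain of models. If you want to keep your elimination scheme, you cannot simply shrink $c$-types to $1$-types: the full $c$-type is exactly what lets you argue that the ground atoms on $\{c^{\modelA},a\}$ fixed when $a$ was created in a $y_i$-slot (via atoms of $\psi$ with variables in $\{z,y_i\}$) agree with those assumed when $a$ later occupies the $x$-slot (via atoms with variables in $\{z,x\}$). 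A plausible repair is to trim a $c$-type to the polynomially many atom patterns over $\{c,y\}$ that your freshness-based unfolding can actually hit --- those induced by atoms of $\psi$ whose variables lie within $\{z,x\}$ or within $\{z,y_i\}$ for some $i$, with some extra care for the equality collapses --- and to trim extensions to the atoms of $\psi$ plus the $1$-type data; then all counts become singly exponential and your argument does yield \textsc{ExpTime}.
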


In \cite{furer1981alternation} it was proved that already the satisfiability problem of $[\forall \exists^2]$ is \textsc{ExpTime}-hard. Thus we have the following immediate corollary.

\begin{corollary}\label{corollary:main-corollary}
    The satisfiability problem of $[\exists\forall \exists^*]_{=}$ is \textsc{ExpTime}-complete.
\end{corollary}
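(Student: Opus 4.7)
The corollary is immediate: Theorem~\ref{theorem:main-theorem} supplies the upper bound, while the $\textsc{ExpTime}$-hardness of $[\forall\exists^2,(\omega)]$ from \cite{furer1981alternation} transfers to $[\exists\forall\exists^*,all]_{=}$ by prepending a dummy existential quantifier that does not occur in the matrix. The substantive content is therefore the upper bound of Theorem~\ref{theorem:main-theorem}, which I would establish by a deterministic type-elimination procedure.

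Given a sentence $\varphi = \exists x \forall y \exists z_1 \cdots \exists z_k\, \phi(x,y,\bar z)$, the first move is to eliminate the leading existential by introducing a fresh constant $c$ and passing to the equisatisfiable sentence $\forall y \exists \bar z\, \phi(c,y,\bar z)$; this reduces the problem to $[\forall\exists^*,all]_{=}$ over the signature $\sigma \cup \{c\}$. I would then consider the set $T_0$ of all $1$-types of $y$ over $\sigma \cup \{c\}$, that is, maximal consistent sets of atomic and negated atomic formulas built from $y$, $c$ and the predicates appearing in $\phi$; there are at most $2^{\mathcal{O}(|\phi|)}$ such types.

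Given $T \subseteq T_0$, call $\tau \in T$ \emph{realizable in $T$} if there exist (i) an equivalence relation on $\{y,c,z_1,\ldots,z_k\}$, (ii) a type from $T$ for each equivalence class not containing $y$ or $c$, and (iii) a complete atomic diagram on $\{y,c,\bar z\}$ extending $\tau$ and the chosen $1$-types, such that $\phi(c,y,\bar z)$ holds. The procedure initialises $T := T_0$ and iteratively discards any $\tau$ not realizable in the current $T$, halting at a fixed point $T^{*}$; it then declares $\varphi$ satisfiable iff $T^{*}$ contains a (mutually consistent) type asserting $y = c$. Correctness proceeds by a standard two-way argument: in any model $\modelA \models \varphi$ the set $\{\tp(a/c):a\in\modelA\}$ is preserved by the elimination step and hence lies inside $T^{*}$; conversely, from $T^{*}$ one builds a model by taking one representative per type together with its certifying witness pattern. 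The restriction to a \emph{single} universal quantifier is essential at this last step, because atomic facts linking witnesses of distinct elements are never forced by $\phi$, so they can be set to false without invalidating any witness pattern.

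The main obstacle is the careful treatment of equality and of the distinguished element $c$: witnesses $z_i$ may coincide with $c$, with $y$, or with one another, and all surviving types must agree on the atomic facts about $c$—a subtlety typically handled by stratifying $T_0$ according to a guessed atomic type of $c$, of which there are only $2^{\mathcal{O}(|\phi|)}$. Once this bookkeeping is in place, the realizability check for a single type enumerates at most $2^{\mathcal{O}(|\phi|^2)}$ patterns, each polynomial-time checkable, and at most $|T_0|$ rounds of elimination are required, yielding the desired $\textsc{ExpTime}$ upper bound.
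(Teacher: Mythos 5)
Your first paragraph is exactly the paper's own proof of the corollary: the lower bound is imported from \cite{furer1981alternation} via a vacuous leading existential quantifier, and the upper bound is Theorem \ref{theorem:main-theorem}; that part is correct and complete as a proof of the corollary. The trouble is in the type-elimination argument you volunteer for Theorem \ref{theorem:main-theorem} itself, and the gap is quantitative but fatal: the class is $[\exists\forall\exists^*,all]_{=}$, so relation symbols of \emph{unbounded arity} are allowed. A single symbol $R$ of arity $r\approx|\varphi|$ already yields $2^{r}$ distinct atoms with all arguments in $\{y,c\}$, so your set $T_0$ of $1$-types of $y$ over $\sigma\cup\{c\}$ has size $2^{2^{\Omega(|\varphi|)}}$, not $2^{\mathcal{O}(|\phi|)}$; likewise a complete atomic diagram on $\{y,c,z_1,\dots,z_k\}$ contains on the order of $(k+2)^{r}$ atoms, so even writing down a single candidate pattern takes exponential time and your count of $2^{\mathcal{O}(|\phi|^2)}$ patterns per realizability check is off by an exponential. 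As described, the fixpoint computation is a \textsc{2ExpTime} procedure and does not yield the corollary. Restricting attention to instantiations of atoms actually occurring in $\phi$ does not repair this, since one atom with $d$ distinct variables has $2^{d}$ instantiations over $\{y,c\}$. The paper avoids the blow-up by never manipulating $2$-types or full diagrams at all: its state consists only of $1$-types in the narrow sense of literals $R(x,\dots,x)$ (at most $2^{|\sigma|}$ of them) together with the polynomially many truth values that the atoms of $\psi$ take under the witness assignment, and it trades your deterministic fixpoint for an alternating polynomial-space game, concluding via \textsc{APspace} $=$ \textsc{ExpTime} \cite{alternation}.

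A second, independent problem is the model construction ``one representative per type together with its certifying witness pattern.'' If two elements $a$ and $a'$ reuse the same representatives $d_j,d_l$ as witnesses, their certifying diagrams may assert contradictory facts about tuples drawn from the shared elements $\{c,d_j,d_l\}$ (say $R(d_j,d_l,c)$ versus its negation), so representatives cannot in general be merged; your own observation that atomic facts linking witnesses of distinct elements ``are never forced'' is true only when each element receives \emph{fresh} witnesses. This is precisely why the paper builds an increasing chain $\modelB_0\subseteq\modelB_1\subseteq\cdots$ in which the witness structure attached to each $b\in B_m$ overlaps the rest of the model only in $b_0$ and $b$, and then completes all remaining tuples negatively. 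In summary: the corollary-level reduction matches the paper and is fine, but the upper-bound argument you sketch needs either alternation (as in the paper) or a substantially more parsimonious state space before it gives \textsc{ExpTime} rather than \textsc{2ExpTime}.
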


To prove Theorem \ref{theorem:main-theorem}, we will design an alternating polynomial space procedure which, roughly speaking, starts by guessing the isomorphism type of the leading existential quantifier and proceeds to guess witnesses for universally selected elements.

\section{Preliminaries}

In this paper we will work with vocabularies which do not contain constants and function symbols. We will also assume that there are no relation symbols of arity $0$. We will use the Fraktur capital letters to denote structures, and the corresponding Roman letters to denote their domains.

Let $\tau$ be a vocabulary. An \emph{atomic $\tau$-formula} $\alpha(v_1,\dots,v_m)$ is of the form $R(v_1,\dots,v_m)$, where $R\in \tau$. A $1$-\emph{type} $\pi$ over $\tau$ is a maximally consistent set of unary literals, by which we mean formulas of the form $\alpha(x)$ or $\neg \alpha(x)$, where $\alpha$ is an atomic $\tau$-formula. Given a model $\modelA$ and $a\in A$ we will use $\tp_{\modelA}[a]$ to denote the unique $1$-type which $a$ \emph{realizes} in $\modelA$, namely the set
\[\{\alpha(x) \mid \modelA \models \alpha(a)\} \cup \{\neg \alpha(x) \mid \modelA \not\models \alpha(a)\},\]
where $\alpha$ ranges over atomic $\tau$-formulas.

$1$-type of a single element describes completely the set of quantifier-free formulas it satisfies. For our alternating procedure we will also need a closely related notion which specifies sufficiently large portions of the quantifier-free types of pairs of elements. To define this formally, we need to set up some notation. Consider a quantifier-free formula $\psi$ and suppose that $\{x_1,\dots,x_n\}$ contains all the free variables of $\psi$. We will use $\Atom(\psi)$ to denote the set of subformulas of $\psi$ (note that by definition $\Atom(\psi)$ does not contain equalities between variables). Given a formula $\chi$, we define
\[\sim \chi := \begin{cases}
    \chi' & \text{, if } \chi \text{ is } \neg \chi' \\
    \neg \chi & \text{, otherwise}
\end{cases}\]
\noindent We let $\cl(\psi)$ denote the smallest set which contains $\Atom(\psi)$ and is closed under $\sim$.

Fix two distinct variables $x,y \in \{x_1,\dots,x_n\}$. An $(x,y)$-\emph{substitution} is a mapping $s:\{x_1,\dots,x_n\} \to \{x_1,\dots,x_n\}$ with the property that $s(x) = x$ and $s(y) = y$. Given a $(x,y)$-substitution $s$, we use $\cl(\psi,s)$ to denote the set
\[\{\alpha(s(v_1),\dots,s(v_m)) \mid \alpha(v_1,\dots,v_m) \in \cl(\psi) \text{ and } s(\{v_1,\dots,v_m\}) = \{x,y\}\}.\]
For any $(x,y)$-substitution, a maximally consistent set $\rho \subseteq \cl(\psi,s)$ is called a $(2,\psi)$-\emph{profile}. Note that for a fixed pair $(x,y)$  there are at most $2^{|\psi|}n^n \leq 2^{|\psi|^2}$ $(2,\psi)$-profiles, since $\psi$ has at most $|\psi|$ distinct subformulas.

Consider a model $\modelA$, an assignment $s:\{x_1,\dots,x_n\} \to A$ and two distinct elements $a,b\in \modelA$. Suppose that $s(x) = a$ and $s(y) = b$. We define
\[\tp_{\modelA,s}^\psi[a,b] := \{\alpha(s^*(v_1),\dots,s^*(v_m)) \mid \modelA \models \alpha(s(v_1),\dots,s(v_m)) \text{ and } s(\{v_1,\dots,v_m\}) = \{a,b\}\},\]
where 
\[s^*(v_i) := \begin{cases}
    x & \text{, if } s(v_i) = a \\
    y & \text{, otherwise}
\end{cases}\]
as the $(2,\psi)$-profile that $(a,b)$ \emph{realizes} in $\modelA$. We emphasize that in the above definition $\alpha$ ranges (again) only over the formulas in $\cl(\psi)$.

\section{Proof of the upper bound}

Our goal is to design an alternating procedure running in polynomial space, which determines whether a given sentence $\varphi \in [\exists\forall \exists^*]_{=}$ is satisfiable. Since \textsc{APspace} = \textsc{ExpTime} \cite{alternation}, Theorem \ref{theorem:main-theorem} will follow from this.

Fix a sentence
\[\varphi := \exists z \forall x \exists y_1 \dots \exists y_n \psi(z,x,y_1,\dots,y_n),\]
where $\psi$ is quantifier-free. Throughout this section $\tau$ will denote the set of relation symbols occurring in $\varphi$. For technical convenience we will replace $\psi$ with the following quantifier-free formula
\[\psi(z,z,y_1,\dots,y_n) \land (x = z \lor \psi(z,x,y_1,\dots,y_n))\]
Clearly the resulting sentence is equivalent with $\varphi$. Thus we can assume that $x\neq z$ over models of size two.

We start with an important auxiliary definition. In this definition, and also in the rest of this paper, the free variables of $(2,\psi)$-profiles are $z$ and $x$.

\begin{definition}
    Let $\pi_0$ and $\pi$ be $1$-types over $\tau$ and let $\rho$ be a $(2,\psi)$-profile. A pair $(\modelC,s)$, where $\modelC$ is a model of size $n+2$ and $s:\{z,x,y_1,\dots,y_n\}\to C$ is an assignment for which $s(z) \neq s(x)$, is called a $(\pi,\rho)$-\emph{witness candidate} for $\pi_0$ and $\psi$. A $(\pi,\rho)$-witness candidate $(\modelC,s)$ is called a $(\pi,\rho)$-\emph{witness} for $\pi_0$ and $\psi$, if it satisfies the following requirements.
    \begin{enumerate}
        \item $\tp_{\modelC}[s(z)] = \pi_0$ and $\tp_{\modelC}[s(x)] = \pi$
        \item $\tp_{\modelC,s}^\psi[s(z),s(x)] \cup \rho$ is consistent
        \item $\modelC \models \psi(s(z),s(x),s(y_1),\dots,s(y_n))$
    \end{enumerate}
\end{definition}

Observe that the size of the description of $\modelC$ in a witness $(\modelC,s)$ is in the worst case exponential with respect to $|\varphi|$ (the length of $\varphi$), which causes the description of the whole witness to be too large for our purposes. However, to determine whether
\[\modelC \models \psi(s(z),s(x),s(y_1),\dots,s(y_n))\]
holds, we only need to know, in addition to $C$ and $s$, the truth values that atomic formulas of $\psi$ receive under the assignment $s$ in the model $\modelC$, and these can be described using descriptions which have size polynomial with respect to $|\varphi|$, since $\psi$ has at most $|\psi|$ subformulas. We also note that, besides true atomic formulas, our alternating procedure needs to know the $1$-types of elements of $\modelC$, which can of course also be described with a description of size polynomial with respect to $|\varphi|$.

We will now present the promised alternating procedure. First, we will define an auxiliary 
alternating process \textbf{AckermannSatRoutine} that receives as its input a tuple 
\[(\varphi,\pi_0,c,\pi,\rho),\]
where $c$ is a counter (a natural number), $\pi_0$ and $\pi$ are $1$-types over $\tau$ and $\rho$ is a $(2,\psi)$-profile.

\begin{enumerate}
    \item If $c = 2^{|\tau|}2^{|\psi|^2} + 1$, then \textbf{accept}.
    \item \textbf{Existentially guess} a $(\pi,\rho)$-witness candidate $(\modelC,s)$ for $\pi_0$ and $\psi$.
    \item If $(\modelC,s)$ is not a $(\pi,\rho)$-witness for $\pi_0$ and $\psi$, then \textbf{reject}.
    \item \textbf{Universally choose} an element $i \in C$.
    \item Set $\pi := \tp_{\modelC}[i], \rho := \tp_{\modelC,s}^\psi[f(z),i]$ and $c := c + 1$.
    \item Run \textbf{AckermannSatRoutine}$(\varphi,\pi_0,c,\pi,\rho)$.
\end{enumerate}

\noindent Now we define \textbf{AckermannSat} to be the following procedure.
\begin{enumerate}
    \item If $\varphi$ has a model of size one, then \textbf{accept}.
    \item \textbf{Existentially guess} $1$-types $\pi_0$ and $\pi$.
    \item \textbf{Existentially guess} a $(z,x)$-substitution $s:\{z,x,y_1,\dots,y_n\} \to \{z,x,y_1,\dots,y_n\}$ and a maximally consistent set $\rho \subseteq \cl(\psi,s)$.
    \item Run \textbf{AckermannSatRoutine}$(\varphi,\pi_0,0,\pi,\rho)$.
\end{enumerate}

\noindent Keeping in mind the fact that one can represent $2^{|\tau|}2^{|\psi|^2} + 1$ using only polynomially many bits, it is clear that \textbf{AckermannSat} uses only polynomial amount of space. The following two lemmas establish its correctness.

\begin{lemma}
    If $\varphi$ is satisfiable, then $\mathbf{AckermannSat}$ accepts $\varphi$.
\end{lemma}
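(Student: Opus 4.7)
I would prove this by describing an explicit winning strategy for the existential player in the alternating computation, driven by a satisfying model $\modelA \models \varphi$. Fix an element $a \in A$ realizing the leading $\exists z$, so that $\modelA \models \forall x \exists y_1 \dots \exists y_n\, \psi(a, x, y_1, \dots, y_n)$. At step 1 of $\mathbf{AckermannSat}(\varphi)$ the existential player guesses $\pi_0 := \tp_{\modelA}[a]$.

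The whole argument then hinges on the invariant that every call $\mathbf{AckermannSatRoutine}(\varphi, \pi_0, c, \pi)$ reached along the computation has $\pi$ realized in $\modelA$; this holds initially since $a$ realizes $\pi_0$. For the inductive step, suppose $b \in A$ realizes the incoming $\pi$. Using the defining property of $a$, pick $c_1, \dots, c_n \in A$ with $\modelA \models \psi(a, b, c_1, \dots, c_n)$, and guess a witness candidate $(\modelC, f)$ whose domain contains $\{a, b, c_1, \dots, c_n\}$ and is padded (if necessary) up to size $n+2$, where all relational facts on $\{a, b, c_1, \dots, c_n\}$ are inherited from $\modelA$, each padding element is assigned $1$-type $\pi_0$, and the remaining relational facts involving padding elements are chosen arbitrarily subject to that $1$-type constraint. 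Finally, set $f(z) = a$, $f(x) = b$, $f(y_i) = c_i$.

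Then $(\modelC, f)$ is a bona fide $\pi$-witness: the $1$-types of $f(z)$ and $f(x)$ match $\pi_0$ and $\pi$ by construction, and $\modelC \models \psi(f(z), f(x), f(y_1), \dots, f(y_n))$ because the truth of this ground formula depends only on the relational atoms (including equalities) among $\{a, b, c_1, \dots, c_n\}$, which agree with $\modelA$. For any element $i \in C$ chosen by the universal player, $\tp_{\modelC}[i]$ is either $\tp_{\modelA}[j]$ for some $j \in \{a, b, c_1, \dots, c_n\}$ or equal to $\pi_0$, so it is realized in $\modelA$ and the invariant propagates. Since the counter $c$ strictly increases at each recursive call, after at most $2^{|\sigma|}+1$ calls the procedure reaches the base case and accepts.

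The only subtlety I anticipate is reconciling the rigid size-$(n+2)$ requirement on $\modelC$ with the invariant; the padding trick above handles this cleanly, since padding elements neither affect the witness condition (which only looks at the image of $f$) nor jeopardize the invariant (their $1$-types are realized by $a$ in $\modelA$).
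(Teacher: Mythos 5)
Your proof is correct and follows essentially the same approach as the paper, whose proof is just the one-line observation that all existential guesses can be made in accordance with a model $\modelA \models \varphi$. Your write-up simply makes that sketch explicit, and the padding argument you give (to meet the rigid size-$(n+2)$ requirement while preserving the invariant that $\pi$ is realized in $\modelA$) correctly fills in a detail the paper leaves implicit.
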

\begin{proof}
    If $\varphi$ has a model of size one, then $\mathbf{AckermannSat}$ clearly accepts $\varphi$. On the other hand, if $\modelA$ is a model of $\varphi$ of size at least two, then all the existential guesses can be made in accordance with $\modelA$.
\end{proof}

\begin{lemma}
    If $\mathbf{AckermannSat}$ accepts $\varphi$, then $\varphi$ is satisfiable.
\end{lemma}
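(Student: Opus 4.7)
The plan is to construct a model $\modelA\models\varphi$ from the accepting computation of $\mathbf{AckermannSat}(\varphi)$. Acceptance yields an existential strategy $S$ which, for every reachable configuration $(\pi_0,c,\pi)$, supplies a $\pi$-witness $(\modelC,f)$ such that every subsequent universal choice still leads to acceptance. Unfolding $S$ produces a finite strategy tree of depth at most $2^{|\sigma|}+1$, and since there are only $2^{|\sigma|}$ distinct $1$-types, by pigeonhole every root-to-leaf branch of this tree contains a repeated $1$-type.

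The model $\modelA$ will be obtained by unfolding this strategy tree. I take a single element $z^\modelA$ of type $\pi_0$ to be shared across all witnesses, and for each node $v$ of the tree with witness $(\modelC_v,f_v)$ I introduce fresh copies of the $n+1$ non-$z$ elements of $\modelC_v$, copying the atomic facts of $\modelC_v$ onto the corresponding $(n+2)$-tuple in $\modelA$. The universal player's choice of an element $i\in C_v$ at node $v$ is realized by identifying $i$ with the element playing the role of $x$ in the child node's witness, so that every element of $\modelA$ eventually appears as $x$ at some node and is thus supplied with its own witnesses. At the leaves of the strategy tree, I close the loop by identifying each leaf with an ancestor of the same $1$-type, which exists by the pigeonhole observation, yielding a finite structure. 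Once this is done, $z^\modelA$ witnesses $\exists z$ and every $a\in A$ has witnesses $b_1,\dots,b_n\in A$ provided by the node at which $a$ plays the role of $x$, so that $\psi(z^\modelA,a,b_1,\dots,b_n)$ holds in $\modelA$ by construction.

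The main obstacle I expect is to verify that the identifications demanded by the construction, namely the shared $z^\modelA$, the merging of a universally chosen element with the $x$-element of the child witness, and the pigeonhole closure of leaves with ancestors, do not prescribe conflicting atomic facts on any tuple. The identification of $z^\modelA$ is harmless because all witnesses agree on its $1$-type $\pi_0$, while the other two identifications require that consecutive witnesses along a branch agree on the atomic type of the shared pair $(z,i)$. I plan to address this by choosing $S$ canonically, depending only on $\pi$ and not on $c$, so that for each $1$-type $\pi$ the $\pi$-witness supplied by $S$ always induces the same atomic type on $(z,x)$; under such a choice the desired consistency reduces to checking that the strategy picks witnesses which are mutually coherent, and this in turn is forced by the fact that all the relevant recursive calls must themselves succeed for $S$ to accept.
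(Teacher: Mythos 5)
Your first half is exactly the paper's: extract a strategy whose move depends only on the current $1$-type $\pi$, share a single element interpreting $z$ across all witnesses, and arrange that every element eventually plays the role of $x$ at a node that supplies its witnesses. The first genuine divergence is your loop-back at the leaves, and it is a gap. A $1$-type records only the literals $R(x,\dots,x)$; it says nothing about atoms linking the element to $z^{\modelA}$, so identifying a leaf with an ancestor of the same $1$-type may merge two elements that disagree on atoms such as $R(z,x)$, and no choice of ancestor repairs this (see below). The paper never builds a finite model and never closes the tree: it constructs an increasing chain $\modelB_0 \subseteq \modelB_1 \subseteq \cdots$, at stage $m+1$ gluing onto each $b \in B_m$ a freshly renamed copy of the witness for $\tp_{\modelB_m}[b]$ (overlapping $B_m$ only in $\{b_0,b\}$), completing all undetermined tuples negatively, and takes the union of the chain. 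The pigeonhole bound $2^{|\sigma|}+1$ is used there for something different from your use of it: since a shortest path to any reachable $1$-type in the positional strategy's type graph has length at most the number of $1$-types, every type realized inside any reachable witness is itself reached within the counter budget, so the strategy supplies a witness for every type in $\Phi$ and the iteration can continue forever.

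The more serious gap is your discharge of the coherence obligation, which you rightly isolate as the crux but then wave away. Acceptance of the recursive calls cannot force agreement between the parent witness's atomic type on $(z,i)$ and the child witness's atomic type on $(z,x)$, because the recursion threads only the $1$-type of $i$, which does not record atoms like $R(z,i)$; positionality makes the child witness a function of that $1$-type, but ties it to nothing in the parent. Concretely, take $\psi := R(z,x) \wedge \neg R(z,y_1)$: every witness must satisfy $R(f(z),f(x))$ while containing an element $f(y_1)$ with $\neg R(f(z),f(y_1))$, and all elements can be given the same $1$-type, so every recursive call succeeds up to the counter bound even though the identification of the chosen $i = f(y_1)$ with the next node's $x$-element prescribes contradictory values for $R(z^{\modelA}, \cdot)$ --- as it must, since $\exists z \forall x \exists y\, (R(z,x) \wedge \neg R(z,y))$ is unsatisfiable. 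So ``forced by the fact that the recursive calls succeed'' is simply false; the fix is not a cleverer strategy but a richer invariant: treat $z$ as a distinguished parameter and let the type threaded through the recursion record all literals over the variables $\{z,x\}$ that mention $x$. There are still only exponentially many such types, the polynomial-space bound survives, and the agreement you need on $(z,i)$ then holds by definition of the recursive call. You should be aware that the paper's own gluing conditions (the restriction of $\modelB_{m+1}$ to $B_m$ is $\modelB_m$, while its restriction to $C_b$ is $\modelC_b$) overlap on the pair $(b_0,b)$ and tacitly presuppose exactly this agreement without verifying it; your instinct that this step requires proof was correct --- what fails in your proposal is the claim that it comes for free.
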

\begin{proof}
    Suppose that $\varphi$ does not have a model of size one, but the existential player $\exists$ still has a positional winning strategy $\sigma$ in the alternating reachability game played on the configuration graph of the procedure $\mathbf{AckermannSat}$ on input $\varphi$. Let $\pi_0$ denote the $1$-type that $\sigma$ instructs $\exists$ to choose at the start of the game. Without loss of generality we can assume that the move determined by $\sigma$ in Step 2 depends only on the current $1$-type $\pi$ and the $(2,\psi)$-profile $\rho$ (and not on the value of $c$).
    
    Let $\mathcal{W}$ denote the set of all witnesses that are encountered in those histories of the alternating reachability game where $\exists$ moves according to $\sigma$. Note that strictly speaking \textbf{AckermannSat} does not guess the whole structure $\modelC$ in any of the witnesses $(\modelC,s) \in \mathcal{W}$. In other words there might be tuples of elements of $\modelC$ for which \textbf{AckermannSat} did not specify all of the relation symbols whose interpretations contain them. For definiteness, if \textbf{AckermannSat} did not specify whether $(c_1,\dots,c_k)$ belongs to $R^{\modelC}$, then we specify that it does not belong to $R^{\modelC}$ (in other words we complete the models $\modelC$ in a minimal way, although the exact choice of completion does not matter here).
    
    A pair $(\pi,\rho)$, where $\pi$ is a $1$-type over $\tau$ and $\rho$ is a $(2,\psi)$-profile, is called an \emph{extended} $1$-\emph{type}. Since $2^{|\tau|}2^{|\psi|^2} + 1$ is an upper bound on the number extended $1$-types, for every $(\modelC,s)\in \mathcal{W}$ and $i\in C$ the pair
    \[(\tp_{\modelC}[s(i)],\tp_{\modelC,s}^\psi[s(z),s(i)]),\]
    which we call the extended $1$-type that $i$ \emph{realizes}, is encountered in some history of the alternating reachability game where $\exists$ follows $\sigma$. Let $\Phi$ denote the set of extended $1$-types that are realized by elements in the witnesses that belong to $\mathcal{W}$. 
    
    Our goal is to construct an increasing sequence of models
    \[\modelB_0 \leq \modelB_1 \leq \modelB_2 \leq \dots\]
    with the goal being that their union will be a model of $\varphi$. We start by describing how the model $\modelB_0$ can be constructed. Recall that $\pi_0$ denotes the $1$-type that was selected by $\exists$ at the start of the game. In addition to $\pi_0$, the strategy $\sigma$ instructs $\exists$ to choose an another $1$-type $\pi$ and a $(2,\psi)$-profile $\rho$. We define $\modelB_0$ to be a model consisting of two elements $b_0$ and $b_1$, so that $b_0$ and $b$ will have the $1$-types $\pi_0$ and $\pi$ respectively, and furthermore the pair $(b_0,b)$ realizes the $(2,\psi)$-profile $\rho$.
    
    Before proceeding with the rest of the proof, we first give a high level description of how rest of the models $\modelB_1,\modelB_2, \dots$ will be constructed. Given a model $\modelB_m$ we want to assign witnesses for elements that lack them; such elements are called \emph{defects}. More precisely, a defect is an element $b\in (B_m - \{b_0\})$ for which there does not exists a tuple $(d_1,\dots,d_n) \in B_m^n$ so that 
    \[\modelB_m \models \psi(b_0,b,d_1,\dots,d_n).\]
    These witnesses will be selected in a natural way from $\mathcal{W}$. Of course, to be able to do this, we need to make sure that for every $b\in (B_m - B_{m-1})$ we have that $b$ realizes an extended $1$-type from $\Phi$; this will be guaranteed by construction.
    
    Now, an important technical detail is that we have to be very careful with the way in which we specify the structure of pairs $(b_0,b)$, for $b\neq b_0$. Indeed, \textbf{AckermannSat} will only specify the $(2,\psi)$-profile of $(b_0,b)$ and we can only extend this \emph{after} we have provided a witness for $b$, since the witness structure that we use might enforce additional constraints on the structure of $(b_0,b)$. This technical detail might cause our structures $\modelB_m$ to be incomplete, since at stage $m$ for every $b \in B_m - (B_{m-1} \cup \{b_0\})$ we have only specified what is the $(2,\psi)$-profile of the pair $(b_0,b)$. This will not cause us problems, because we can completely specify the structure of $(b_0,b)$ after we have provided a witness for $b$.
    
    Now we continue with the formal proof. Suppose that we have constructed the model $\modelB_m$ and we want to construct the model $\modelB_{m+1}$. For every defect $b$ we will pick a $(\pi,\rho)$-witness $(\modelC_b,s) \in \mathcal{W}$, where $\pi := \tp_{\modelB_m}[b]$ and $\rho$ is the $(2,\psi)$-profile that $(b_0,b)$ realizes in $\modelB_m$. Without loss of generality we will assume that $s(z) = b_0$ and $s(x) = b$. Furthermore, we will assume that $B_m \cap C_b = \{b_0,b\}$ and for any distinct $b\neq b'$ we have that $C_b \cap C_{b'} = \{b_0\}$. We then define the model $\modelB_{m+1}$ as follows.
    \begin{enumerate}
        \item Its domain is the set
        \[B_m \cup \bigcup_{b \text{ is a defect}} C_b.\]
        \item For every $R \in \tau$ and $(b_1,\dots,b_k) \in B_m^k$, where $k$ is the arity of $R$ and $\{b_1,\dots,b_k\} \neq \{b_0,b\}$, we specify that
        \[(b_1,\dots,b_k) \in R^{\modelB_{m+1}} \Leftrightarrow (b_1,\dots,b_k) \in R^{\modelB_m}.\]
        \item We will specify that $(b_0,b)$ realizes the $(2,\psi)$-profile $\tp_{\modelC_b,s}^\psi[b_0,b] \cup \rho$. Then, for every $R \in \tau$ and $(b_1,\dots,b_k)\in B_m^k$ for which we have not yet specified whether $(b_1,\dots,b_k)$ belongs to $R^{\modelB_{m+1}}$, we specify that $(b_1,\dots,b_k)$ does not belong to $R^{\modelB_{m+1}}$.
        \item For every $c \in (C_b - \{b_0,b\})$, we will specify that $(b_0,c)$ realizes $\tp_{\modelC_b,s}^\psi[b_0,c]$.
        \item For every $R\in \tau$ and $(c_1,\dots,c_k) \in C_b^k$, where $k$ is the arity of $R$ and for every $c\in C_b$ (including $b_0$) $\{c_1,\dots,c_k\} \neq \{b_0,c\}$, we specify that
        \[(c_1,\dots,c_k) \in R^{\modelB_{m+1}} \Leftrightarrow (c_1,\dots,c_k) \in R^{\modelC_b}.\]
    \end{enumerate}
    This completes the construction of $\modelB_{m+1}$.
\end{proof}

It is perhaps worthwhile to try to explain why the above argument fails in the case where there are two (or more) existential quantifiers, since the reason for this seems to be somewhat technical. Suppose that an imaginary variant of \textbf{AckermannStart} would start by guessing, say, the $1$-types of the two elements $a,a'$ that we are going to choose to interpretate the two existentially quantified variables. As the procedure goes through different $1$-types and different ways universally selected elements are related to $a$ and $a'$, the procedure needs to remember how $a$ and $a'$ are related to each other, if it wants to guarantee that the pair $(a,a')$ satisfies quantifier-free formulas in a consistent manner in the different existentially guessed witness structures. But since the procedure can run for an exponential amount of time, it could run out of memory, since describing the full structure of $(a,a')$ will require a description of exponential size.

\section{Conclusions}

In this paper we have fixed an error in the literature around prefix fragments by proving that the complexity of the satisfiability problem of the Ackermann fragment becomes \textsc{ExpTime}-complete, if we allow the sentences to contain at most one leading existential quantifier. To prove the \textsc{ExpTime} upper bound, we designed an alternating polynomial space procedure which tries to essentially construct a model for the input sentence.

There are still open problems concerning the complexity of certain prefix fragments. Perhaps the most challenging one would be to determine the exact complexity of the Ackermann fragment extended with a single unary function. This fragment was proved to be decidable in \cite{Shelah1977DecidabilityOA}, but for a complete proof we refer the reader to the book \cite{borger97}.

\section*{Acknowledgement}

The author would like to thank Antti Kuusisto for his helpful discussions concerning an early version of this paper, and Bartosz Bednarczyk for suggesting several improvements on the presentation.

\bibliographystyle{plainurl}
\bibliography{bibliography}

\end{document}